\begin{document}

\title{Lower Bounds for the Size of Nondeterministic Circuits}

\author{Hiroki Morizumi}
\institute{Interdisciplinary Graduate School of Science and Engineering, Shimane University, Shimane 690-8504, Japan\\
\email{morizumi@cis.shimane-u.ac.jp}}

\maketitle

\begin{abstract}
Nondeterministic circuits are a nondeterministic computation model in
circuit complexity theory.
In this paper, we prove a $3(n-1)$ lower bound for the size of nondeterministic
$U_2$-circuits computing the parity function.
It is known that the minimum size of (deterministic) $U_2$-circuits computing
the parity function exactly equals $3(n-1)$.
Thus, our result means that nondeterministic computation is useless to
compute the parity function by $U_2$-circuits and cannot reduce the size
from $3(n-1)$.
To the best of our knowledge, this is the first nontrivial lower bound
for the size of nondeterministic circuits (including formulas, constant
depth circuits, and so on) with unlimited nondeterminism for an
explicit Boolean function.
We also discuss an approach to proving lower bounds for the size of
deterministic circuits via lower bounds for the size of nondeterministic
restricted circuits.
\end{abstract}

\section{Introduction}

Proving lower bounds for the size of Boolean circuits is a central topic
in circuit complexity theory.
The gate elimination method is one of well-known proof techniques to prove
lower bounds for the size of Boolean circuits, and has been used to prove
many linear lower bounds including the best known lower bounds for the size
of Boolean circuits over the basis $B_2$~\cite{B84}\cite{DK11} and
the basis $U_2$~\cite{LR01}\cite{IM02}.

In this paper, we show that the gate elimination method works well also for
nondeterministic circuits.
For deterministic circuits, it is known that the minimum size of
$U_2$-circuits computing the parity function exactly equals
$3(n-1)$~\cite{S74}.
The proof of the lower bound is based on the gate elimination method
and has been known as a typical example that the method is effective.
In this paper, we prove a $3(n-1)$ tight lower bound for the size of
nondeterministic $U_2$-circuits computing the parity function, which means that
nondeterministic computation is useless to compute the parity function
by $U_2$-circuits and cannot reduce the size from $3(n-1)$.

To the best of our knowledge, our result is the first nontrivial lower bound
for the size of nondeterministic circuits (including formulas, constant
depth circuits, and so on) with unlimited nondeterminism for an
explicit Boolean function.
In this paper, we show that, for $U_2$-circuits, a known proof technique
(i.e., the gate elimination method) for deterministic circuits is applicable
to the nondeterministic case.
This implies the possibility that proving lower bounds for the size of
nondeterministic circuits may not be so difficult in contrast with the intuition
and known proof techniques might be applicable to the nondeterministic case
also for other circuits.
One of motivations to prove lower bounds for the size of nondeterministic
circuits is from some relations between the size of deterministic
circuits and nondeterministic circuits.
We also discuss an approach to proving lower bounds for the size of
deterministic circuits via lower bounds for the size of nondeterministic
restricted circuits.

\section{Preliminaries}

\subsection{Definitions}

{\em Circuits} are formally defined as directed acyclic graphs.
The nodes of in-degree 0 are called {\em inputs}, and each one of them
is labeled by a variable or by a constant 0 or 1.
The other nodes are called {\em gates}, and each one of them
is labeled by a Boolean function.
The {\em fan-in} of a node is the in-degree of the node, and
the {\em fan-out} of a node is the out-degree of the node.
There is a single specific node called {\em output}.

We denote by $B_2$ the set of all Boolean functions
$f:\{0,1\}^2 \rightarrow \{0,1\}$.
By $U_2$ we denote $B_2 - \{\oplus, \equiv\}$, i.e., $U_2$ contains
all Boolean functions over two variables except for the XOR function and
its complement.
A Boolean function in $U_2$ can be represented as the following form:
$$f(x,y) = ((x \oplus a) \wedge (y \oplus b)) \oplus c,$$
where $a, b, c \in \{0,1\}$.
A {\em $U_2$-circuit} is a circuit in which each gate has fan-in 2
and is labeled by a Boolean function in $U_2$.
A {\em $B_2$-circuit} is similarly defined.

A {\em nondeterministic circuit} is a circuit with {\em actual inputs}
$(x_1, \ldots, x_n) \in \{0,1\}^n$ and some further inputs
$(y_1, \ldots, y_m) \in \{0,1\}^m$ called {\em guess inputs}.
A nondeterministic circuit computes a Boolean function $f$ as follows:
For $x \in \{0,1\}^n$, $f(x)=1$ iff there exists a setting of the guess inputs
$\{y_1, \ldots, y_m\}$ which makes the circuit output 1.
In this paper, we call a circuit without guess inputs
a {\em deterministic circuit} to distinguish it
from a nondeterministic circuit.

The {\em size} of a circuit is the number of gates in the circuit.
The {\em depth} of a circuit is the length of the longest path from an input
to the output in the circuit.
We denote by $size^{\rm dc}(f)$ the size of the smallest deterministic
$U_2$-circuit computing a function $f$, and denote by $size^{\rm ndc}(f)$
the size of the smallest nondeterministic $U_2$-circuit computing
a function $f$.

While we mainly consider $U_2$-circuits in this paper,
we also consider other circuits in Section~\ref{sec:disc}.
A {\em formula} is a circuit whose fan-out is restricted to 1.
The parity function of $n$ inputs $x_1, \ldots, x_n$, denoted by Parity$_n$,
is 1 iff $\sum x_i \equiv 1 \pmod{2}$.

\subsection{The gate elimination method} \label{subsec:method}

The proof of our main result is based on the gate elimination method,
and based on the proof of the deterministic case.
In this subsection, we have a quick look at them.

Consider a gate $g$ which is labeled by a Boolean function in $U_2$.
Recall that any Boolean function in $U_2$ can be represented as
the following form:
$$f(x,y) = ((x \oplus a) \wedge (y \oplus b)) \oplus c,$$
where $a, b, c \in \{0,1\}$.
If we fix one of two inputs of $g$ so that $x = a$ or $y = b$,
then the output of $g$ becomes a constant $c$.
In such case, we call that $g$ is {\em blocked}.

\begin{theorem}[Schnorr~\cite{S74}] \label{thrm:par}
$$size^{\rm dc}({\rm Parity}_n) = 3(n-1).$$
\end{theorem}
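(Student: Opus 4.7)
The upper bound $\leq 3(n-1)$ is by construction: each $2$-input XOR can be expressed as $(x \wedge \neg y) \vee (\neg x \wedge y)$, which costs $3$ $U_2$-gates (negations are absorbed into the parameters $a, b, c$ of the representation recalled in Section~\ref{subsec:method}); chaining $n-1$ such gadgets gives a $U_2$-circuit of size exactly $3(n-1)$ for Parity$_n$.

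For the lower bound $\geq 3(n-1)$ I would induct on $n$, with base $n = 1$ trivial since Parity$_1$ is the identity. For the inductive step $n \geq 2$, given an optimal $U_2$-circuit $C$ computing Parity$_n$, I aim to prove an elimination lemma: there exist a variable $x_i$ and a constant $b \in \{0,1\}$ such that substituting $x_i := b$ and propagating constants removes at least $3$ gates from $C$. Since $\mathrm{Parity}_n|_{x_i = b}$ equals either Parity$_{n-1}$ or its complement, the residual circuit computes Parity$_{n-1}$ up to output negation, which costs no gates in $U_2$; hence the induction hypothesis gives $|C| - 3 \geq 3(n-2)$, closing the bound.

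The elimination lemma rests on a basic local fact from the $U_2$-parameterization: fixing either input of a $U_2$-gate to any value causes the gate to collapse---either to a constant (if blocked) or to a one-variable function of its other input, which can be absorbed into the parent gate's parameters. Consequently, fixing a variable of fan-out $s$ directly removes $s$ gates. If some variable has fan-out $\geq 3$ the lemma is immediate; the substantive case is when the maximum variable fan-out is $\leq 2$. I would then pick a variable $x$ of fan-out $2$ and argue that the constants produced at its two successor gates cascade into at least one further gate, for a total of $3$ eliminations. Working this out requires a case analysis on how the two successors of $x$ interact (shared parent, adjacent parents, etc.) and a use of the optimality of $C$ to rule out degenerate configurations in which the cascade stalls. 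Verifying that no ``thin'' topology leaves only $2$ eliminated gates is the main obstacle and is the technical heart of Schnorr's original argument.
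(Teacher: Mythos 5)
Your upper bound and overall induction scheme match the paper: the same $(x \wedge \neg y) \vee (\neg x \wedge y)$ gadget chained $n-1$ times, and the same plan of eliminating $3$ gates per variable restriction. But the lower bound as written has a genuine gap: the elimination lemma is exactly the content of the theorem, and you explicitly defer it (``Verifying that no `thin' topology leaves only $2$ eliminated gates is the main obstacle and is the technical heart''). Nothing in your sketch actually establishes that three gates disappear; the case analysis you gesture at is never carried out, so the proof is incomplete precisely where it matters.

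The missing idea in the paper's argument is the choice of a \emph{top gate}: a gate $g_1$ both of whose inputs are variables $x_i, x_j$ (such a gate exists in any deterministic circuit by taking one at minimal depth). Optimality then forces $x_i$ to feed a second gate $g_2$ (otherwise blocking $g_1$ via $x_j$ makes the output independent of $x_i$), and forces $g_1$ not to be the output, so some $g_3$ reads $g_1$. Assigning $x_i$ the constant that blocks $g_1$ eliminates $g_1$, $g_2$, and $g_3$, with the degenerate case $g_2 = g_3$ handled by noting that this gate then becomes constant and must feed a fourth gate. Your alternative route---case analysis on variable fan-out, then picking a fan-out-$2$ variable and cascading---has two additional problems beyond being unfinished: first, fixing $x$ does not produce ``constants at its two successor gates''; a $U_2$-gate is blocked only when its input equals that gate's own parameter $a$, so the two successors may require opposite constants and the non-blocked one merely collapses to a literal, which cascades nothing. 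Second, you would still need to rule out the possibility that every variable has fan-out $1$, which requires its own degeneracy argument. Both difficulties evaporate with the top-gate choice, which is why the paper uses it.
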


\begin{proof}
Assume that $n \geq 2$.
Let $C$ be an optimal deterministic $U_2$-circuit computing Parity$_n$.
Let $g_1$ be a top gate in $C$, i.e.,  whose two inputs are connected from
two inputs $x_i$ and $x_j$, $1 \leq i, j \leq n$.
Then, $x_i$ must be connected to another gate $g_2$, since, if $x_i$ is
connected to only $g_1$, then we can block $g_1$ by an assignment
of a constant to $x_j$ and the output of $C$ becomes independent from $x_i$,
which contradicts that $C$ computes Parity$_n$.
By a similar reason, $g_1$ is not the output of $C$.
Let $g_3$ be a gate which is connected from $g_1$.
See Figure~\ref{fig:sch}.

\begin{figure}[t]
  \begin{center}
    \hspace{-2cm}
    \includegraphics[scale=0.6]{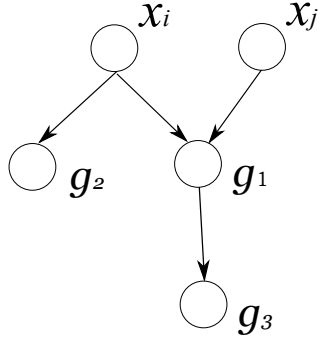}
    \caption{Proof of Theorem~\ref{thrm:par}}
    \label{fig:sch}
  \end{center}
\end{figure}

We prove that we can eliminate at least 3 gates from $C$ by an assignment
to $x_i$.
We assign a constant 0 or 1 to $x_i$ such that $g_1$ is blocked.
Then, we can eliminate $g_1$, $g_2$ and $g_3$.
If $g_2$ and $g_3$ are the same gate, then the output of $g_2$ ($= g_3$) 
becomes a constant, which means that $g_2$ ($= g_3$) is not the output
of $C$ and we can eliminate another gate which is connected from
$g_2$ ($= g_3$).
Thus, we can eliminate at least 3 gates and the circuit come to compute
Parity$_{n-1}$ or $\neg$Parity$_{n-1}$.
For deterministic circuits, it is obvious that 
$size^{\rm dc}({\rm Parity}_{n-1}) = size^{\rm dc}(\neg{\rm Parity}_{n-1})$.
Therefore,
\begin{eqnarray*}
size^{\rm dc}({\rm Parity}_n) & \geq   & size^{\rm dc}({\rm Parity}_{n-1}) + 3 \\
                            & \vdots & \\
                            & \geq   & 3(n-1).
\end{eqnarray*}

$x \oplus y$ can be computed with 3 gates by the following form:
$$(x \wedge \neg y) \vee (\neg x \wedge y).$$
Therefore, $size^{\rm dc}({\rm Parity}_n) \leq 3(n-1)$. \qed
\end{proof}

\section{Proof of the Main Result}

In this section, we prove the main theorem.
For deterministic circuits, there must be a top gate whose two inputs
are connected from two (actual) inputs $x_i$ and $x_j$, $1 \leq i, j \leq n$.
However, for nondeterministic circuits, there may be no such gate,
since there are not only actual inputs but also guess inputs
in nondeterministic circuits.
We need to defeat the difficulty.
See Section~\ref{subsec:method} for the definition of ``block''.

\begin{theorem} \label{thrm:main}
$$size^{\rm ndc}({\rm Parity}_n) = 3(n-1).$$
\end{theorem}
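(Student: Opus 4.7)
The plan is to adapt Schnorr's gate-elimination proof of Theorem~\ref{thrm:par} by induction on $n$; the upper bound $size^{\rm ndc}({\rm Parity}_n) \leq 3(n-1)$ is inherited from Theorem~\ref{thrm:par}, since every deterministic circuit is also a nondeterministic circuit. For the lower bound I would take an optimal nondeterministic $U_2$-circuit $C$ for ${\rm Parity}_n$ and aim to exhibit an assignment to a single actual input $x_i$ that (i) eliminates at least three gates, and (ii) leaves a circuit computing ${\rm Parity}_{n-1}$ or $\neg{\rm Parity}_{n-1}$ on the remaining $n-1$ actual inputs. To reconcile the two possible residual functions I would use the zero-cost substitution $x_j \mapsto \bar x_j$ available in every $U_2$-circuit (flip one constant in each gate that $x_j$ feeds), which gives $size^{\rm ndc}({\rm Parity}_{n-1}) = size^{\rm ndc}(\neg{\rm Parity}_{n-1})$ in the nondeterministic setting as well and lets a single inductive hypothesis suffice.

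The obstacle flagged just before Theorem~\ref{thrm:main} is that Schnorr's proof starts from a top gate whose two inputs are actual variables, and such a gate need not exist when $C$ is nondeterministic. My first step is therefore to show that $C$ must contain a top gate $g_1$ with at least one actual-variable input. If every top gate of $C$ were fed only by guess variables, I would argue that one of them admits a local simplification contradicting optimality: for instance, a top gate whose two guess inputs both have fan-out $1$ can be replaced by a single fresh guess variable, saving one gate, and a short fan-out analysis reduces to such a configuration.

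Once such a $g_1$ is secured, I case-split on the other input of $g_1$. In \textbf{Case A}, the other input is an actual variable $x_j$, and Schnorr's argument applies verbatim: essential dependence of ${\rm Parity}_n$ on $x_i$ forces $x_i$ to feed a second gate $g_2$ and forces $g_1$ to feed a third gate $g_3$, and the assignment to $x_i$ that blocks $g_1$ eliminates $g_1, g_2, g_3$. In \textbf{Case B}, the other input is a guess variable $y_p$, which is the novel case. I would split again by the fan-out of $x_i$: if $x_i$ has fan-out at least $2$, blocking $g_1$ through $x_i$ furnishes a constant input to both the second gate fed by $x_i$ and the downstream gate of $g_1$, yielding three eliminations just as before; if $x_i$ has fan-out $1$, I would derive a contradiction with the sensitivity of ${\rm Parity}_n$ to $x_i$ whenever $y_p$ also has fan-out $1$ (existentially quantifying $y_p$ kills the dependence of $C$'s output on $x_i$), and otherwise exploit the multiple uses of $y_p$ to expose a third eliminable gate.

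The hardest part will be Case B with $x_i$ of fan-out $1$: Schnorr's justification that $x_i$ must feed another gate relied on being able to assign the other input of the top gate to a constant, which is not available for a guess variable. Overcoming this requires the refined case analysis sketched above, in which the nondeterministic semantics of $y_p$ either rules a configuration out or produces an extra gate for elimination through the coupling with $y_p$'s other uses. This adaptation is the technical heart of the proof of Theorem~\ref{thrm:main}.
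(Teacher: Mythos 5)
Your overall frame (the upper bound inherited from Theorem~\ref{thrm:par}, gate elimination with induction on the number of actual inputs, and the input-negation trick giving $size^{\rm ndc}({\rm Parity}_{n-1}) = size^{\rm ndc}(\neg{\rm Parity}_{n-1})$) agrees with the paper, but there is a genuine gap exactly where you place the technical heart: the situation in which the actual input $x_i$ feeding $g_1$ has fan-out $1$. An assignment to such an $x_i$ touches only the single gate $g_1$, so no local count can reach three eliminated gates: if the assignment blocks $g_1$ you remove $g_1$ and at best its successors, and if it does not, $g_1$ degenerates to a unary function of $y_p$ and you save one gate. ``Exploiting the multiple uses of $y_p$'' cannot supply the missing gates, because $y_p$ is existentially quantified and may not be fixed to a constant without changing the computed function; your plan for this sub-case has no workable mechanism. (Two smaller issues: in the sub-case where $x_i$ and $y_p$ both have fan-out $1$, your stated reason is imprecise --- existential quantification over $y_p$ does not make the output independent of $x_i$; what is true is that every accepting guess for the blocking value $x_i=a$ converts into one for $x_i=\neg a$, so $f|_{x_i=a}\le f|_{x_i=\neg a}$ pointwise, which contradicts parity for $n\ge 2$. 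And your preliminary reduction eliminating all-guess top gates only handles guess inputs of fan-out $1$, so it does not actually secure the top gate you need.)

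The paper avoids all of this by splitting on a different axis: not on the type of $g_1$'s other input, but on whether \emph{some} actual input has fan-out at least $2$. If so, three gates fall at once (your Case A and your easy half of Case B are both instances of this). If every actual input has fan-out $1$, the paper does not try to eliminate gates at all; instead it performs a size-preserving reconstruction. Take a gate $g_1$ with one input from an actual input $x_i$ and the other from a node $v$ depending only on guess inputs, and let $c$ be the value of $v$ that blocks $g_1$. On any accepting computation $v$ must equal $\neg c$ (otherwise the accepted output would be insensitive to $x_i$, contradicting parity --- this is the monotonicity observation above, promoted to a tool rather than just a contradiction). So one may hard-wire $\neg c$ into $g_1$, delete $g_1$, and append a new output gate that rejects unless $v=\neg c$. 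The size and the computed function are unchanged, and the guess-only tail at the output strictly grows, so iterating terminates in a circuit in which some actual input has fan-out at least $2$. This reconstruction is the idea missing from your proposal; without it, or something equivalent, the fan-out-$1$ case does not go through.
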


\begin{proof}
By theorem~\ref{thrm:par},
$$size^{\rm ndc}({\rm Parity}_n) \leq size^{\rm dc}({\rm Parity}_n) = 3(n-1).$$

Assume that $n \geq 2$.
Let $C$ be an optimal nondeterministic $U_2$-circuit computing Parity$_n$.
We prove that we can eliminate at least 3 gates from $C$ by an assignment
of a constant 0 or 1 to an actual input.

\bigskip

\noindent Case 1. There is an actual input $x_i$, $1 \leq i \leq n$, which
is connected to at least two gates.

\smallskip

Let $g_1$ and $g_2$ be gates which are connected from $x_i$.
Since we can block $g_1$ by an assignment of a constant to $x_i$,
$g_1$ is not the output of $C$ and there is a gate $g_3$ which is connected
from $g_1$.
See Figure~\ref{fig:case1}.

We prove that we can eliminate at least 3 gates from $C$ by an assignment
to $x_i$.
We assign a constant 0 or 1 to $x_i$ such that $g_1$ is blocked.
Then, we can eliminate $g_1$, $g_2$ and $g_3$.
If $g_2$ and $g_3$ are the same gate, then the output of $g_2$ ($= g_3$) 
becomes a constant, which means that $g_2$ ($= g_3$) is not the output
of $C$ and we can eliminate another gate which is connected from
$g_2$ ($= g_3$).
Thus, we can eliminate at least 3 gates.

\begin{figure}[t]
  \begin{center}
    \includegraphics[scale=0.6]{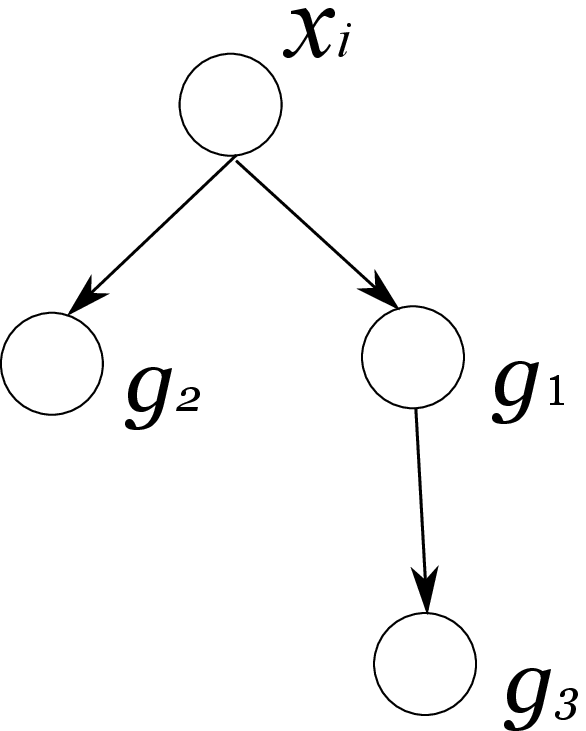}
    \caption{Case 1}
    \label{fig:case1}
  \end{center}
\end{figure}

\bigskip

\noindent Case 2. Every actual input is connected to at most one gates.

\smallskip

Let $g_1$ be a gate in $C$ such that one of two inputs is connected from
an actual input $x_i$ and the other is connected from a node $v$ whose
output is dependent on only guess inputs and independent from actual inputs.
($v$ may be a gate and may be a guess input.)
Consider that an assignment to actual inputs and guess inputs is given.
Then, if the value of the output of $v$ blocks $g_1$ by the assignment,
then the output of $C$ must be 0, since, if the output of $C$ is 1,
then the value of the Boolean function which is computed by $C$ becomes
independent from $x_i$, which contradicts that $C$ computes Parity$_n$.
(Note that the output of $C$ can be 0. The difference is from the definition
 of nondeterministic circuits.)
We use the fact above and reconstruct $C$ as follows.

Let $c$ be a constant 0 or 1 such that if the output of $v$ is $c$, then
$g_1$ is blocked.
We fix the input of $g_1$ from $v$ to $\neg c$ and eliminate $g_1$.
We prepare a new output gate $g_2$ and connect the two inputs of $g_2$
from the old output gate and $v$.
$g_2$ is labeled by a Boolean function in $U_2$ so that the output of $g_2$
is 1 iff the input from the old output gate is 1 and the input from $v$ is
$\neg c$.
Let $C'$ be the reconstructed circuit.
See Figure~\ref{fig:case2}.

In the reconstruction, we eliminated one gate ($g_1$) and added one gate
($g_2$). Thus, the size of $C'$ equals the size of $C$.
In $C'$, if the output of $v$ is $c$, then the output of $C'$ becomes 0
by $g_2$.
If the output of $v$ is $\neg c$, then the output of $C'$ equals the output
of the old output gate and $g_1$ has been correctly eliminated since we
fixed the input of $g_1$ from $v$ to $\neg c$ in the reconstruction.
Thus, $C$ and $C'$ compute a same Boolean function.
We repeat such reconstruction until the reconstructed circuit satisfies
the condition of Case 1.
The repetition must be ended, since one repetition increases continuous
gates whose one input is dependent on only guess inputs (i.e., $g_2$)
at the output.

\begin{figure}[t]
  \begin{center}
    \includegraphics[scale=0.5]{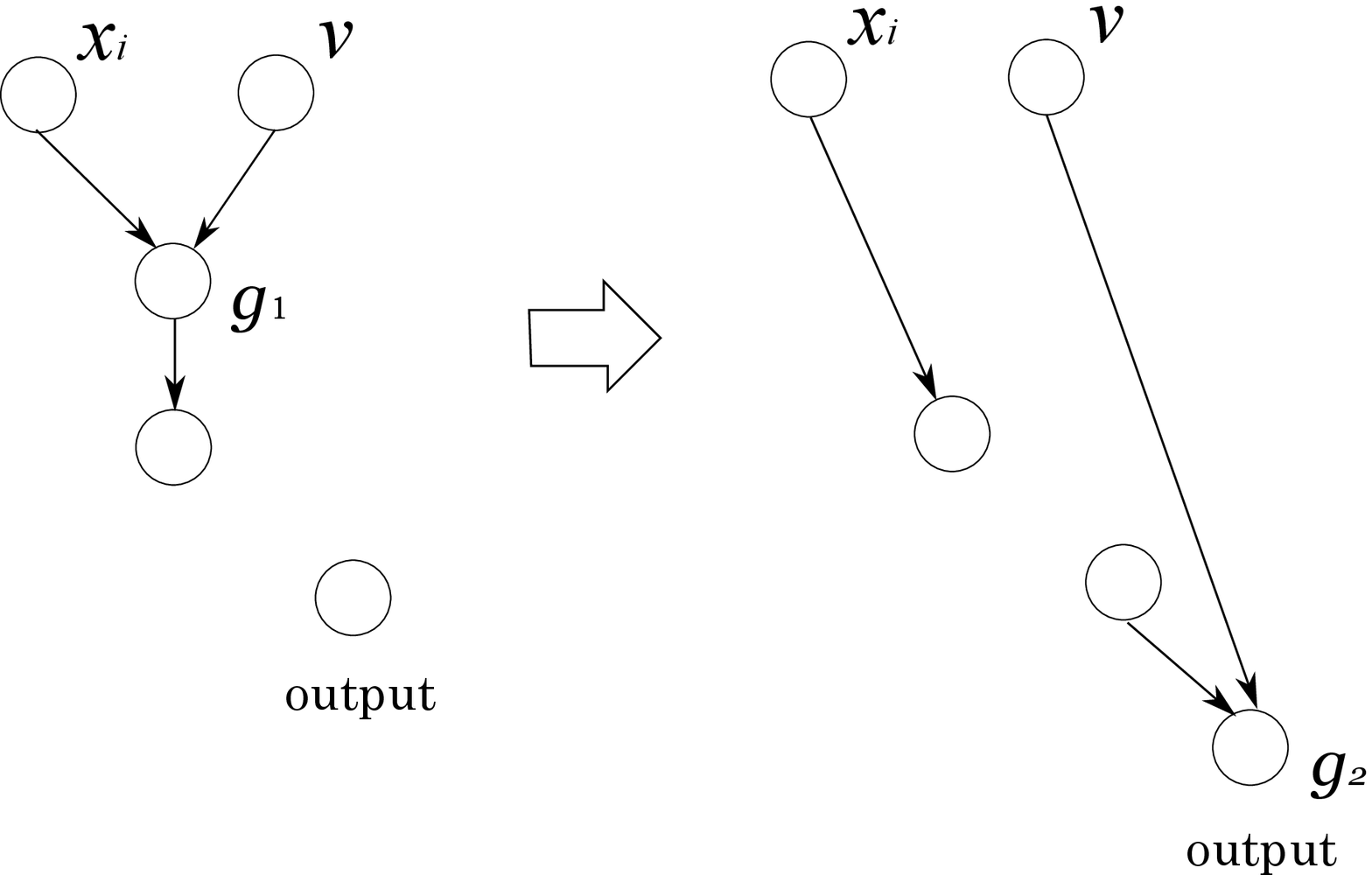}
    \caption{Case 2}
    \label{fig:case2}
  \end{center}
\end{figure}

\bigskip

Thus, we can eliminate at least 3 gates for all cases and the circuit
come to compute Parity$_{n-1}$ or $\neg$Parity$_{n-1}$.

\begin{lemma}
$$size^{\rm ndc}({\rm Parity}_{n-1}) = size^{\rm ndc}(\neg{\rm Parity}_{n-1}).$$
\end{lemma}

\begin{proof}
Let $C'$ be a nondeterministic $U_2$-circuit computing Parity$_{n-1}$.
For nondeterministic circuits, notice that negating the output gate in $C'$
does not give a circuit computing $\neg$Parity$_{n-1}$.
We negate an arbitrary actual input $x_i$, $1 \leq i \leq n$, in $C'$
and obtain a nondeterministic $U_2$-circuit which computes
$\neg$Parity$_{n-1}$ and has the same size to $C'$, which can be done
by relabeling each Boolean function of all gates which are connected
from $x_i$. \qed
\end{proof}

Therefore,

\begin{eqnarray*}
size^{\rm ndc}({\rm Parity}_n) & \geq   & size^{\rm ndc}({\rm Parity}_{n-1}) + 3 \\
                            & \vdots & \\
                            & \geq   & 3(n-1).
\end{eqnarray*}

Thus, the theorem holds. \qed

\end{proof}

\section{Discussions} \label{sec:disc}

In this paper, we proved a $3(n-1)$ lower bound for the size of nondeterministic
$U_2$-circuits computing the parity function.
To the best of our knowledge, this is the first nontrivial lower bound
for the size of nondeterministic circuits with unlimited nondeterminism
for an explicit Boolean function.
In this section, as one of motivations to prove lower bounds for the size
of nondeterministic circuits, we discuss an approach to proving lower bounds
for the size of deterministic circuits via lower bounds for the size of
nondeterministic restricted circuits.

It is known that the Tseitin transformation~\cite{T68}
converts an arbitrary Boolean circuit to a CNF formula.
We restate the Tseitin transformation as the form of the following theorem.

\begin{theorem} \label{thrm:conv1}
Any $B_2$-circuit of $n$ inputs and size $s$ can be converted to
a nondeterministic 3CNF formula of $n$ actual inputs,
$s$ guess inputs, and size $O(s)$.
\end{theorem}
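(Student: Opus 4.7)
The plan is to apply the classical Tseitin transformation. Let $C$ be the given $B_2$-circuit with gates $g_1, \ldots, g_s$. I would introduce one fresh guess variable $y_i$ for each gate $g_i$, intended to hold the value that $g_i$ computes on the given input; this yields exactly $s$ guess inputs in the nondeterministic formula, matching the bound in the statement.

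Next, for each gate $g_i$ with inputs $u_i, v_i$ (each of which is either an actual input $x_j$, a constant, or the guess variable $y_k$ of some predecessor gate $g_k$) and labeled by a Boolean function $f_i \in B_2$, I would encode the constraint $y_i \leftrightarrow f_i(u_i, v_i)$. Since $f_i$ depends on only two variables, this biconditional is a Boolean function of at most three variables, so its conjunctive normal form consists of at most four clauses, each containing at most three literals. I would write out these clauses explicitly for each $f_i \in B_2$ (essentially a $16$-case table) to verify the $3$-literal bound. Finally, I would add a single unit clause forcing the guess variable $y_{\rm out}$ associated with the output gate of $C$ to equal $1$. The resulting 3CNF formula is the conjunction of all these clauses, which as a formula over $U_2$ has size $O(s)$ since each gate contributes $O(1)$ clauses, each of $O(1)$ literals, combined by a balanced tree of AND gates.

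For correctness I would argue that, given any assignment $x \in \{0,1\}^n$ to the actual inputs, the 3CNF formula is satisfiable by some guess assignment iff the unique guess assignment defined by $y_i := g_i(x)$ (the actual value computed by $g_i$ in $C$ on input $x$) satisfies every clause; by construction this assignment satisfies all gate-encoding clauses, and it satisfies the final unit clause iff the output of $C$ on $x$ is $1$. Hence the nondeterministic formula computes the same Boolean function as $C$.

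There is no serious obstacle here. The one step that needs care is the explicit 3CNF encoding of $y \leftrightarrow f(u,v)$ for every $f \in B_2$; but since $y \leftrightarrow f(u,v)$ has exactly four falsifying assignments out of eight, each such assignment contributes a single 3-literal clause, giving the claimed bound uniformly. Modulo writing this encoding once and for all, the size accounting and correctness arguments are immediate.
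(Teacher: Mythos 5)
Your proposal is correct and is essentially the paper's own argument: the paper's proof simply says to introduce one guess input per gate and apply the Tseitin transformation, which is exactly the construction you spell out in detail (gate-consistency clauses $y_i \leftrightarrow f_i(u_i,v_i)$ plus a unit clause on the output variable). The extra care you take with the explicit 3CNF encoding and the uniqueness of the satisfying guess assignment is a faithful elaboration of the same route, not a different one.
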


\begin{proof}
We prepare one guess input for the output of each gate in the $B_2$-circuit,
and use the Tseitin transformation. \qed
\end{proof}

Thus, if we hope to prove a nonlinear lower bound for the size of deterministic
general circuits, (which is a major open problem in circuit complexity theory,)
then it is enough to prove a nonlinear lower bound for the size of
nondeterministic circuits in the theorem.
The nondeterministic circuit in Theorem~\ref{thrm:conv1} is a constant depth
circuit with depth two and some restrictions.
In this paper, we saw that, for $U_2$-circuits, a known proof technique
(i.e., the gate elimination method) for deterministic circuits is applicable
to the nondeterministic case.
It remains future work whether many known ideas or techniques for constant depth
circuits are applicable to nondeterministic circuits.

The basic idea of the proof of Theorem~\ref{thrm:conv1} and
the Tseitin transformation can be widely applied.
We show another example in which guess inputs are prepared for a part of
gates in the circuit.

\begin{theorem}
Any $B_2$-circuit of $n$ inputs, size $O(n)$ and depth $O(\log n)$
can be converted to a nondeterministic formula of $n$ actual inputs,
$O(n / \log\log n)$ guess inputs, and size $O(n^{1+\epsilon})$,
where $\epsilon > 0$ is an arbitrary small constant.
\end{theorem}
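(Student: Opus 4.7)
The plan is to invoke Valiant's classical graph-theoretic depth-reduction lemma: in any DAG of depth $d$ with $e$ edges one can find a set $S$ of $O(e/\log d)$ vertices such that every path in the DAG avoiding $S$ has length $O(d/\log d)$. Applied to the given $B_2$-circuit, which is a fan-in-$2$ DAG with $O(n)$ edges and depth $D = O(\log n)$, this produces a set $S$ of $O(n/\log\log n)$ gates whose removal leaves a sub-DAG of depth $m = O(\log n/\log\log n)$.

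Next I would introduce a guess input $y_g$ for each $g \in S$ and decompose the circuit into pieces: one piece $C_{\mathrm{out}}$ rooted at the original output gate, and, for each $g \in S$, one piece $C_g$ rooted at $g$. The leaves of each piece are either actual inputs or guess inputs $y_{g'}$ corresponding to $S$-gates on the boundary; by the Valiant bound each piece has depth $O(m)$. Unrolling a fan-in-$2$ sub-DAG of depth $O(m)$ into a tree yields a formula of size at most $2^{O(m)} = n^{O(1/\log\log n)}$, which is $\le n^\epsilon$ for any fixed $\epsilon > 0$ and sufficiently large $n$.

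The nondeterministic formula would then be assembled as
$$F \;=\; F_{\mathrm{out}} \;\wedge\; \bigwedge_{g\in S}\bigl(y_g \leftrightarrow F_g\bigr),$$
where $F_{\mathrm{out}}$ and each $F_g$ are the unrolled trees of $C_{\mathrm{out}}$ and $C_g$. Because the trees are built on vertex-disjoint copies of the pieces, every internal gate of $F$ has fan-out $1$; only the actual and guess inputs appear at multiple leaves, which the definition of a formula permits. The total size is $|F_{\mathrm{out}}| + \sum_{g\in S}|F_g| + O(|S|) = O\bigl((|S|+1)\cdot n^\epsilon\bigr) = O(n^{1+\epsilon})$, and the number of guess inputs is $|S| = O(n/\log\log n)$, as required.

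The hard part is obtaining the right quantitative depth control. A naive pigeonhole cut at the cheapest level in each block of $m$ consecutive levels does give $O(n/\log n)$ cut gates, but it does not bound sub-circuit depth: wires in the circuit may skip over cut levels, reintroducing paths of length up to $D$, and patching the circuit to be strictly layered by inserting identity gates inflates its size by a $\Theta(\log n)$ factor and destroys the counting. Valiant's lemma is essential precisely because it certifies a vertex cut in the DAG itself with a genuine depth bound on the remaining graph, which in turn bounds the formula size of each unrolled piece.
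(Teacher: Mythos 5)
Your proof follows essentially the same route as the paper's: apply Valiant's depth-reduction lemma to obtain $O(n/\log\log n)$ cut points, introduce a guess input for each, unroll each resulting shallow piece into a tree of size $n^{\epsilon}$, and output the conjunction of the consistency checks with the piece rooted at the output. The one caveat is that your quantitative form of Valiant's lemma (depth $O(d/\log d)$ after removing $O(e/\log d)$ edges) overstates what the lemma gives at that budget --- it yields depth $d/2^{O(1)}$, i.e.\ $\epsilon\log n$ --- but that weaker bound is exactly what the paper uses and already suffices for your $n^{\epsilon}$-size pieces, so the argument is unaffected.
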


\begin{proof}
Let $C$ be such a $B_2$-circuit.
It is known that we can find $O(n / \log\log n)$ edges in $C$ whose
removal yields a circuit of depth at most $\epsilon \log n$
(\cite{V76}, Section~14.4.3 of \cite{AB09}).
We prepare $O(n / \log\log n)$ guess inputs for the edges, and
one guess input for the output of $C$.
Consider that an assignment to actual inputs and guess inputs is given.
It can be checked whether the value of each guess input corresponds
to correct computation by $O(n / \log\log n)$ nondeterministic formulas
of size $n^{\epsilon}$, since the depth is at most $\epsilon \log n$.
We construct a nondeterministic formula so that it outputs 1 iff all guess
inputs are correct and the guess input which corresponds to the output of
$C$ is 1. \qed
\end{proof}

For the case that the number of guess inputs is limited,
there is a known lower bound for the size of nondeterministic
formulas~\cite{K98}.

\bibliographystyle{splncs03}
\bibliography{circuit}

\begin{thebibliography}{1}
\providecommand{\url}[1]{\texttt{#1}}
\providecommand{\urlprefix}{URL }

\bibitem{AB09}
Arora, S., Barak, B.: Computational Complexity - {A} Modern Approach. Cambridge
  University Press (2009)

\bibitem{B84}
Blum, N.: A boolean function requiring 3n network size. Theor. Comput. Sci.
  28,  337--345 (1984)

\bibitem{DK11}
Demenkov, E., Kulikov, A.S.: An elementary proof of a 3n - o(n) lower bound on
  the circuit complexity of affine dispersers. In: Proc. of MFCS. pp. 256--265
  (2011)

\bibitem{IM02}
Iwama, K., Morizumi, H.: An explicit lower bound of 5n - o(n) for boolean
  circuits. In: Proc. of MFCS. pp. 353--364 (2002)

\bibitem{K98}
Klauck, H.: Lower bounds for computation with limited nondeterminism. In: Proc.
  of CCC. pp. 141--152 (1998)

\bibitem{LR01}
Lachish, O., Raz, R.: Explicit lower bound of 4.5n - o(n) for boolean circuits.
  In: Proc. of STOC. pp. 399--408 (2001)

\bibitem{S74}
Schnorr, C.: Zwei lineare untere schranken f{\"{u}}r die komplexit{\"{a}}t
  boolescher funktionen. Computing  13(2),  155--171 (1974)

\bibitem{T68}
Tseitin, G.S.: On the complexity of derivation in propositional calculus. In:
  Slisenko, A.O. (ed.) Studies in Constructive Mathematics and Mathematical
  Logic, pp. 115--125 (1968)

\bibitem{V76}
Valiant, L.G.: Graph-theoretic properties in computational complexity. J.
  Comput. Syst. Sci.  13(3),  278--285 (1976)

\end{thebibliography}

\end{document}